\newtheorem{lemma}{Lemma}[section] 
\newtheorem{theorem}{Theorem}[section]
\theoremstyle{remark}
\newtheorem{remark}{Remark}[section]
\newtheorem{example}{Example}[section]
\theoremstyle{definition}
\newtheorem{definition}{Definition}[section]
\newcommand\norm[1]{\left\lVert#1\right\rVert}
\newcommand\bracket[1]{\langle#1\rangle}
\begin{document}
\title{Quantum filtering for multiple diffusive and Poissonian measurements}
\author{Muhammad F. Emzir, Matthew J. Woolley, Ian R. Petersen}
\address{School of Engineering and Information Technology, University of New South Wales, ADFA, Canberra, ACT 2600, Australia}
%\ead{submissions@iop.org}
\vspace{10pt}
\begin{indented}
\item[]November 2014
\end{indented}

	\begin{abstract}
		We provide a rigorous derivation of a quantum filter for the case of multiple measurements being made on a quantum system. We consider a class of measurement processes which are functions of bosonic field operators, including combinations of diffusive and Poissonian processes. This covers the standard cases from quantum optics, where homodyne detection may be described as a diffusive process and photon counting may be described as a Poissonian process. We obtain a necessary and sufficient condition for any pair of such measurements taken at different output channels to satisfy a commutation relationship. Then, we derive a general, multiple-measurement quantum filter as an extension of a single-measurement quantum filter \cite{bouten2007introduction}. As an application we explicitly obtain the quantum filter corresponding to homodyne detection and photon counting at the output ports of a beam splitter, correcting an earlier result \cite{Kuramochi2013}.
	\end{abstract}
	
	\section{Introduction}
	An optimal filter provides the best estimate of unknown variables through a set of observations of a system. To construct the filter, we need three key ingredients. The first is the probability law corresponding to an observation event. The second is the conditional expectation, which relates an  observation result to the unknown variables. Finally, we need to construct the stochastic differential equations, which describe the estimation result.
\\
	The quantum filtering problem was considered in the early 1980's in a series of articles by Belavkin \cite{Belavkin1992,belavkin1980quantum,belavkin1989nondemolition}. In quantum mechanics, any two random variables (represented by operators) do not always commute. This fact requires an extension of Kolmogorov's classical probability theory  to the non-commutative probability theory used in quantum filtering. In the theoretical physics community, the quantum filtering problem is known under the names of the stochastic master equation and quantum trajectory theory \cite{carmichael1993open,wiseman2010quantum}.
	\\
	Quantum filters are typically derived for the case of single measurements. The quantum filtering problem with multiple output fields has been developed using quantum trajectory theory in Refs. \cite{wiseman2010quantum,wiseman2001complete}  with application to multiple-input multiple-output (MIMO) quantum feedback \cite{chia2011quantum}. In Ref. \cite{nurdin2014quantum}, the multiple-output measurements of the generalized "dyne"  type were also considered for the case of a zero-mean jointly Gaussian state.	It is desirable to extend these results to cover a wider class of measurements, e.g., to include both homodyne detection and photon counting in a quantum optics setting. However, in the case of multiple measurements, it is not clear whether every possible combination of measurements  will satisfy the commutation relations required for a joint probability density of the multiple measurements to exist.
	\\
	A jump-diffusive quantum trajectory has also been derived in Ref. \cite{pellegrini2010markov} using a classical Markov chain approximation of the environmental field, where the system is assumed to interact with the environment over a small time interval.  The infinitesimal generator of the Markov process is obtained as the limiting case when the interaction time goes to zero.
	Recently Ref. \cite{amini2014} proved that for a general class of stochastic master equations (SME) driven simultaneously by Wiener and Poissonian process, the quantum filter possesses sub-martingale properties for the fidelity between the estimated state and the actual state. Our quantum filter's SME descriptions will also fall within a class of the SMEs considered in Ref. \cite{amini2014}. However, we derive the filter using quantum stochastic calculus for a class of measurement processes which are functions of bosonic field operators, including combinations of diffusive and Poissonian processes. 
	\\
	Within the experimental quantum optics community, simultaneous measurement of quantum systems are frequently performed \cite{Broome15022013,Spring15022013,lang2013correlations}. Previously,  non-classical states of light have been reconstructed via post-processing of homodyne detection measurement data \cite{NeergaardNielsen2008,KatanyaB.Kuntz2014}. The thousands of homodyne detection records triggered by photon counts were sampled to construct a Wigner function using a heuristic time window approach. One could replace this procedure with the more systematic quantum filtering approach for multiple measurements that we have derived. In addition to this, the quantum filtering approach using both homodyne and photon counting detections could possibly be used as a solution to the  number-resolved photon counting problem, \cite{chen2011microwave}.
	\\
	The purpose of this article is to derive using quantum stochastic calculus, the quantum filter corresponding to multiple measurements made on a quantum system. To achieve this, we first investigate the commutativity of multiple measurement processes \cite{belavkin1989nondemolition}. We use the definitions of concatenation and series product \cite{gough2009series} to describe quantum systems composed of multiple interacting open quantum systems, each of them described by $(S,L,H)$ parameters \cite{parthasarathy2012}. We then formulate a general quantum filter for a quantum system with a finite number of commutative measurements.
	\\
	We will show that the quantum filter for  a quantum system with multiple measurement outputs can be described by a stochastic master equation for the conditional density operator as follows,
			\begin{align}
			d\rho_t = & \underbrace{\left[-i\left[H_t,\rho_t\right] + \mathbf{L}^{\top}\rho_t\mathbf{L}^{\ast} - \frac{1}{2}\mathbf{L}^{\dagger}\mathbf{L}\rho_t - \frac{1}{2}\rho_t\mathbf{L}^{\dagger}\mathbf{L}\right]dt }_{\text{a-priori}}+ \underbrace{\zeta^{\top}\Sigma^{-\top}d\mathbf{W}}_{\text{innovation term}}.
			\end{align}
	This equation includes an a-priori part which is the original unconditional quantum master equation, and a stochastic part which is contained in the \emph{innovation term}. The innovation term relates the measurement records to the evolution of the conditional density operator. In this equation, $d\mathbf{W}$ is a vector of the "error" between the actual measurement and the expected value. $\zeta^{\top}\Sigma^{-\top}$ is the weighting function which associates the contribution of each measurement to the total increment of the conditional density operator. Generally, the evolution of the conditional density operator above will contain both diffusion and jump processes, which allows for simultaneous photon counting and homodyne detection.
	\\	
	We apply our filter to the simple case of photon counting and homodyne measurement at the outputs of a single beam splitter \cite{carmichael2000giant}. \Fref{fig:DoubleMeasurement} shows a typical arrangement of photon counting and homodyne detection at the output  ports of a beam splitter in a quantum optics experiment. In Ref. \cite{Kuramochi2013}, simultaneous continuous measurement of photon counting and homodyne has been considered. However, their derivation does not appropriately account for the presence of the beam splitter. We correct the result of Ref. \cite{Kuramochi2013}, and then give our result for the photon counting and homodyne detection filter in the form of an unnormalized stochastic Schr\"{o}dinger equation (SSE). 
	\\
	We refer the readers to Ref. \cite{bouten2007introduction} for background material, such as an introduction to quantum stochastic calculus, quantum probability, and quantum non-demolition measurements. Furthermore, without loss of generality, the field is assumed to initially be in the vacuum state and the reduced Planck constant $\hbar$ is set to one. We will assume that the quantum stochastic differential equation (QSDE) parameters $S,L,H$ are bounded to ensure that the corresponding solution is unique and unitary, as well as reducing the technical difficulties that  arise.
	
		\begin{figure}
		\centering
		\includegraphics[width=\textwidth]{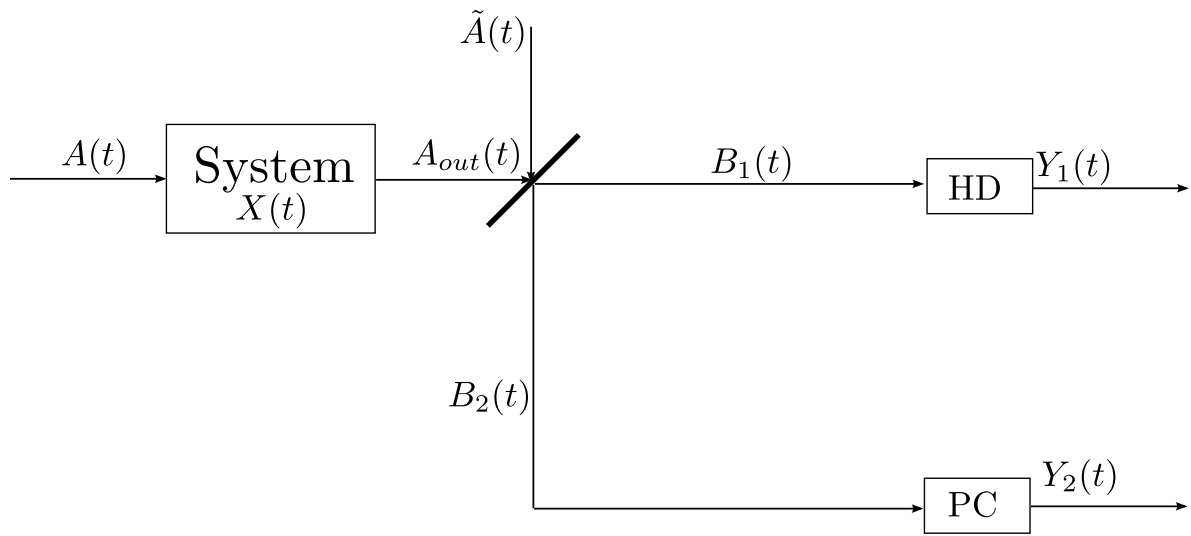}
		\caption{Simultaneous photon counting and homodyne detection at the outputs of a beam splitter in a quantum optics experiment.}
		\label{fig:DoubleMeasurement}
		\end{figure}
	
	\section{Preliminary}
	\subsection{Notation}
	Von Neumann algebras and $\sigma$-algebras are written in calligraphic symbols (e.g. $\mathcal{B}$  for the Borel $\sigma$-algebras on $\mathbb{R}$). As usual, classical probability spaces are written as $(\Omega, \mathcal{F},\mu)$. $\mathbb{E}_{\rho}$ will be used for the expectation of an observable with respect to the density matrix $\rho$. Plain capital letters (e.g. $P$) will be used to denote elements of von Neumann algebras. Bold letters (e.g. $\mathbf{X}$) will be used to denote a matrix whose elements are Hilbert space operators. Serif symbols (e.g. $\mathsf{H}$) are used for Hilbert spaces. Hilbert space adjoints, are indicated by $^{\ast}$, while the complex conjugate transpose will be denoted by $\dagger$, i.e. $\left(\mathbf{X}^{\ast}\right)^{\top} = \mathbf{X}^{\dagger}$. For single-element operators we will use $*$ and $\dagger$ interchangeably. The Hilbert space inner product of $X$ and $Y$ is given by $\bracket{X,Y}$. The commutator of $X$ and $Y$ is given by $[X, Y ] = XY - YX$.

	\subsection{Multiple Output and Input Channels Open Quantum System}
	The dynamics of an open quantum system with multiple bosonic field  input and output channels  can  be described via annihilation, creation and conservation processes. First, let ${z_k, k \geq 1}$ in $\mathsf{V}$ be a complete orthonormal basis. A single particle Hilbert space $\mathsf{h}$ is defined as $\mathsf{V}\otimes L^2(\mathbb{R}_+)$. The \emph{symmetric} Fock space over $\mathsf{h}$ is given by
	\begin{align}
	\Gamma(\mathsf{h}) = \oplus_{n=0}^{\infty}{\mathsf{h}_{\textcircled{s}}}^{n},
	\end{align}
	where ${\mathsf{h}_{\textcircled{s}}}^{n}$ is the $n-$fold symmetric tensor product of $\mathsf{h}$. 
	The exponential vector $e(u) \in \Gamma(\mathsf{h})$ is defined as
	\begin{align}
	e(u) = \oplus_{n=0}^{\infty} \frac{1}{\sqrt{n}} u^{\otimes^n}.
	\end{align}
	The vacuum vector $\Phi$ corresponds to the exponential vector with $u=0$, whilst other coherent vectors $\psi(u)$ are the normalized exponential vectors $e(u), u\neq 0$. The Weyl operator $W(u,U), u \in \mathsf{h}, U \in \mathcal{B}(\mathsf{h})$ is a unique unitary transformation operating on $e(u)$ defined by \cite[\textsection 20]{parthasarathy2012}
	\begin{align}
	W(u,U)e(v) = \left\{\exp\left(-\frac{1}{2}\norm{u}^2 - \bracket{u,Uv}\right)\right\}e\left(Uv+u\right).\label{eq:WeylOperator}
	\end{align}
	\\
	For any $f \in \mathsf{h}$, let us define $f_k (t) \equiv \langle z_k | f(t) \rangle$. The annihilation $A_k (t) $, creation $A^{\dagger}_k (t)$ and conservation $\Lambda^{\dagger}_{kl} (t)$ processes related to the orthonormal basis $z_k$ are given by \cite{barchielli2006continual},
\begin{subequations}
\begin{align}
	A_k (t) & \equiv a (z_k \otimes I_{(0,t]}),\\
	A^{\dagger}_k (t) & \equiv a^{\dagger} (z_k \otimes I_{(0,t]}),\\
	\Lambda^{\dagger}_{kl} (t) & \equiv \lambda (|z_k\rangle\langle z_l| \otimes I_{(0,t]}),
\end{align}\label{eq:QuantumFundamentalProcesses}
\end{subequations}
here the operators $a, a^{\dagger}$ and $\lambda$ is the Stone generator of the corresponding Weyl operators, defined for $H = |z_k\rangle\langle z_l| \otimes I_{(0,t]}$ and $u = z_k \otimes I_{(0,t]}$, 
\begin{align*}
W\left(0,\exp(i t H)\right) = \exp(-it\lambda(H)) ,\; W\left(tu,I\right) = \exp(-itp(u)), \\ 
q = p(iu) ,\; a(u)=\frac{1}{2}\left(q(u)+ip(u)\right) ,\;  a^{\dagger}(u)=\frac{1}{2}\left(q(u)-ip(u)\right).
\end{align*}
		
In the coherent vector domain, the following commutation relations hold,
\begin{subequations}
	\begin{align}
	\left[A_k(t),A_l(s)\right] = &\left[A^{\dagger}_k(t),A^{\dagger}_l(s)\right] = 0 \; , \left[A_k(t),A^{\dagger}_l(s)\right] =  \delta_{kl} \min(t,s).
	\end{align}
\end{subequations}
\\
In addition, other commutation relations can be obtained using the general It\^{o} multiplications as below \cite{parthasarathy2012,barchielli2006continual},
\begin{subequations}
\begin{align}
dt & \equiv d\Lambda_{00} \; , dA_k \equiv d\Lambda_{0k} \; , dA^{\dagger}_k  \equiv d\Lambda_{k0}, \\
d\Lambda_{k} & \equiv d\Lambda_{kk} \; , d\Lambda_{kr}(t)d\Lambda_{sl}(t) = \hat{\delta}_{rs}d\Lambda_{kl},
\end{align}\label{eq:GeneralItoMultiplication}
\end{subequations}
\\
where $\hat{\delta}_{rs}=0, \forall r=0 \cup s=0$ and $\hat{\delta}_{rs} = \delta_{rs}$ otherwise.
The evolution of a system observable in the Heisenberg picture is given by 
\begin{align}
X_t \equiv & U^{\dagger}_t\left(X \otimes I \right)U_t \label{eq:UnitaryMap}
\end{align}
\\
Let $\mathcal{G}$ be an open quantum system with parameters $(\mathbf{S},\mathbf{L},H)$, and $\mathbf{S}\mathbf{S}^{\dagger} = \mathbf{S}^{\dagger}\mathbf{S} = \mathbf{I}$. For any system observable $X$, the following QSDE in the Heisenberg picture is obtained,
\begin{align}
dX_t = & \left(-i\left[X_t,H_t\right] + \mathcal{L}_L (X_t)\right)dt + d\mathbf{A}^{\dagger}_t\mathbf{S}^{\dagger}_t\left[X_t,\mathbf{L}_t\right] + \left[\mathbf{L}^{\dagger}_t,X_t\right]\mathbf{S}_t d\mathbf{A}_t\nonumber\\
& + \text{tr}\left[\left(\mathbf{S}^{\dagger}_t X_t \mathbf{S}_t - X_t\right)d\mathbf{\Lambda}_t^{\top}\right], \label{eq:QSDE_X}
\end{align}
where $\mathcal{L}_L (X_t) = \frac{1}{2} \mathbf{L}_t^{\dagger} \left[X_t,\mathbf{L}_t\right] + \frac{1}{2} \left[\mathbf{L}_t^{\dagger},X_t\right]\mathbf{L}_t$, is the Lindbladian super operator, and all operators evolve according to Eq. \eqref{eq:UnitaryMap}, i.e. $\mathbf{L}_t \equiv U^{\dagger}_t\left(\mathbf{L} \otimes I \right)U_t$. In the Schr\"{o}dinger picture, the corresponding unitary operator evolution is
\begin{align}
					dU_t = &  \left[\text{tr}\left[\left(\mathbf{S} - \mathbf{I}\right)d\mathbf{\Lambda}^\top_t\right] + d\mathbf{A}^{\dagger}_t \mathbf{L}  - \mathbf{L}^{\dagger}\mathbf{S} d\mathbf{A}_t - \left(\dfrac{1}{2}\mathbf{L}^{\dagger}\mathbf{L}+iH\right)dt\right]U_t \;, U_0 = I\label{eq:UnitaryEvolution}
\end{align}
The evolution of the output fields is given by \cite{gough2009series}
\begin{subequations}
	\begin{align}
	d\tilde{\mathbf{A}}_t =& \mathbf{S}_td\mathbf{A}_t + \mathbf{L}_t dt,\\
	d\tilde{\mathbf{\Lambda}}_t =& \mathbf{S}_t^{*} d\mathbf{\Lambda} \mathbf{S}_t^{\top} + \mathbf{S}_t^{*} d\mathbf{A}_t^{\ast}\mathbf{L}_t^{\top} + \mathbf{L}_t^{\ast}d\mathbf{A}_t^{\top}\mathbf{S}_t^{\top} + \mathbf{L}^{\ast}\mathbf{L}^{\top}dt.
	\end{align}\label{eq:OutputFieldEvolution}
\end{subequations}
			
Subsequently, we will show our first result, which will ensure that for a class of output measurements, the commutation relation is satisfied, and hence the corresponding joint probability density function exists.
					
	\section{Main Results}
		
	\subsection{Commutativity of Open Quantum  System Output Channels}
	\begin{definition}\label{def:Commutation}[Commutator of two vectors with non-commutative elements] Let $\mathbf{a},\mathbf{b} \in \mathcal{B}(\mathsf{H})^{n \times 1}$ be vectors whose elements are non-commutative. The commutator of a pair $\mathbf{a},\mathbf{b}$ is given by
			\begin{align}
			[\mathbf{a},\mathbf{b}] = & \mathbf{a}\mathbf{b}^{\top} - \left(\mathbf{b}\mathbf{a}^{\top}\right)^{\top}.
			\end{align}
			\end{definition}
			Notice that in Definition \ref{def:Commutation}, in general, $\left(\mathbf{b}\mathbf{a}^{\top}\right)^{\top} \neq \mathbf{a}\mathbf{b}^{\top}$ due to the non-commutativity of the elements in the pair $\mathbf{a},\mathbf{b}$. We further define the self-commutator of a vector $\mathbf{a}$ whose elements are non-commutative  as $[\mathbf{a},\mathbf{a}]$. It is important to see that the self-commutator is not always equals to zero, as the following simple example shows.
			\begin{example}\label{exm:a-adagger}
			Let $\mathbf{a} = \left[a \; a^{\dagger}\right]^{\top}$,where $a$ is an annihilation operator defined for a Fock space $\Gamma(\mathsf{h})$. Then by Definition \ref{def:Commutation}, the self-commutator of $\mathbf{a}$ is given by
			\begin{align*}
			[\mathbf{a},\mathbf{a}] = & \mathbf{a}\mathbf{a}^{\top} - \left(\mathbf{a}\mathbf{a}^{\top}\right)^{\top}\\
			= & \begin{bmatrix}
			aa & aa^{\dagger},\\
			a^{\dagger}a & a^{\dagger}a^{\dagger}
			\end{bmatrix} - 
			\begin{bmatrix}
					aa & a^{\dagger}a\\
					aa^{\dagger} & a^{\dagger}a^{\dagger}
					\end{bmatrix}
					\\
			= & \begin{bmatrix}
			0 & 1\\
			-1 & 0
			\end{bmatrix}.		
			\end{align*}
			\end{example}
			As the definition and example clearly show, for self-commutator, commutativity is implied by the symmetry properties of $\mathbf{a}\mathbf{a}^{\top}$. Now consider a general measurement equation, which is a function of the field output annihilation, creation, and conservation processes,
			\begin{align}
			d\mathbf{Y}_t =& \mathbf{F}_t^{\ast}d\tilde{\mathbf{A}}_t^{\ast} +\mathbf{F}_t d\tilde{\mathbf{A}}_t+ \mathbf{G}_t d\tilde{\mathbf{\lambda}}_t,\label{eq:GeneralMeasurement}\\
			d\tilde{\mathbf{\lambda}}_t =& \text{diag}\left(d\tilde{\mathbf{\Lambda}}_t\right).\nonumber
			\end{align}
			\\
			Substituting Eq. \eref{eq:OutputFieldEvolution} into \eref{eq:GeneralMeasurement}, we can write the general measurement equation as
			\begin{subequations}
				\begin{align}
				d\mathbf{Y}_t =& \mathbf{F}_t^{\ast} d\mathbf{a}_1+\mathbf{F}_t d\mathbf{a}_2 + d\mathbf{a}_3 + \mathbf{G}_t \left(d\mathbf{b}_1+ d\mathbf{b}_2+d\mathbf{b}_3+d\mathbf{b}_4\right),
				\end{align}
				where
				\begin{align}
				d\mathbf{a}_{1,i} =& \sum_{k = 1}^{n} \mathbf{S}^{*}_{ik}d\mathbf{\Lambda}_{k0} , &
				d\mathbf{a}_{2,i} =& \sum_{k = 1}^{n} \mathbf{S}_{ik}d\mathbf{\Lambda}_{ok} , &
				d\mathbf{a}_{3,i} =& \sum_{k = 1}^{n} \left[\left(\mathbf{FL}\right)^{*}_{i1}+\left(\mathbf{FL}\right)_{i1}\right]dt,\\
				d\mathbf{b}_{1,i} =& \sum_{k,k' = 1}^{n} \mathbf{S}^{*}_{ik}d\mathbf{\Lambda}_{kk'}\mathbf{S}^{\top}_{k'i} , &
				d\mathbf{b}_{2,i} =& \sum_{k = 1}^{n} \mathbf{S}^{*}_{ik}d\mathbf{\Lambda}_{k0}\mathbf{L}^{\top}_{1i} , &
				d\mathbf{b}_{3,i} =& \sum_{k = 1}^{n} \mathbf{L}^{*}_{i1}d\mathbf{\Lambda}^{\top}_{0k}\mathbf{S}^{\top}_{ki},\\
				d\mathbf{b}_{4,i} =& \sum_{k = 1}^{n} \mathbf{L}^{*}_{i1}\mathbf{L}^{\top}_{1i} dt.								
				\end{align}\label{eq:a_and_b}		
			\end{subequations}
			Based on Eq. \eref{eq:GeneralItoMultiplication}, most of the multiplication products between the $d\mathbf{Y}$ elements are zero, while the remaining terms are listed in \Tref{tab:ItoTableResult}. 
			\begin{table}[!h]
				\centering
				\begin{tabular}{|c|c|c|c|}
					\hline  $\times$ &$d\mathbf{b}_1^{\top}$&$d\mathbf{b}_2^{\top}$&$d\mathbf{a}_1^{\top}$\\ 
					\hline  $d\mathbf{b}_1$& $\sum_{k,l,l' = 1}^{n} \mathbf{S}^{*}_{ik}\mathbf{S}^{\top}_{li}\mathbf{S}^{*}_{jl}\mathbf{S}^{\top}_{l'j}d\mathbf{\Lambda}_{kl'}$ & $\sum_{k,l = 1}^{n} \mathbf{S}^{*}_{ik}\mathbf{S}^{\top}_{li}\mathbf{S}^{*}_{jl}\mathbf{L}^{\top}_{1j}d\mathbf{\Lambda}_{k0}$  & $\sum_{k,l=1}^{n} \mathbf{S}_{ik}\mathbf{S}^{\top}_{li} \mathbf{S}^{*}_{jl}d\mathbf{\Lambda}_{k0} $   \\ 
					\hline  $d\mathbf{b}_3$& $\sum_{l,l' = 1}^{n} \mathbf{L}^{*}_{i1}\mathbf{S}^{\top}_{li}\mathbf{S}^{*}_{jl}\mathbf{S}^{\top}_{l'j}d\mathbf{\Lambda}_{0l'}$  & $\sum_{k = 1}^{n} \mathbf{L}^{*}_{i1}\mathbf{S}^{\top}_{ki}\mathbf{S}^{*}_{jk}\mathbf{L}^{\top}_{1j}dt$  & $\sum_{k}^{n} \mathbf{L}^{*}_{i1}\mathbf{S}^{\top}_{ki}\mathbf{S}^{\ast}_{jk} dt$ \\
					\hline  $d\mathbf{a}_2$& $\sum_{k,l'=1}^{n} \mathbf{S}_{ik}\mathbf{S}^{*}_{jk}\mathbf{S}^{\top}_{l'j} d\mathbf{\Lambda}_{0l'}$  & $\sum_{k=1}^{n} \mathbf{S}_{ik} \mathbf{S}^{*}_{jk}\mathbf{L}^{\top}_{1j}dt$  & $\mathbf{I}_{ij}dt$  \\
					\hline 
				\end{tabular}
				\caption{It\^{o} multiplication table for the $d\mathbf{Y}$ components. } \label{tab:ItoTableResult} 
			\end{table}
			\begin{remark}\label{rem:selfCommutativity}
				A set of measurements $\mathbf{Y}_t$ made at the output of a quantum system is \emph{self-commutative} if and only if $d\mathbf{Y}_t d\mathbf{Y}_t^{\top}$ is symmetric.
			\end{remark}
			This fact follows directly from Definition \ref{def:Commutation}. Now we state the following Lemma to prove our main result on the commutation relation for a finite number of outputs of an open quantum system.
			\begin{lemma}\label{lem:ItoTableResult}
				The off diagonal elements in the It\^{o} \Tref{tab:ItoTableResult} for multiplication between the $d\mathbf{Y}$ elements are zero.
			\end{lemma}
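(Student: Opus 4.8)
The plan is to reduce every entry of \Tref{tab:ItoTableResult} to something carrying an explicit Kronecker factor $\delta_{ij}$, using only the unitarity of the scattering matrix. Since $\mathbf{S}\mathbf{S}^{\dagger}=\mathbf{S}^{\dagger}\mathbf{S}=\mathbf{I}$, the operator-valued entries satisfy
\begin{align*}
\left(\mathbf{S}\mathbf{S}^{\dagger}\right)_{ij}=\sum_{k}\mathbf{S}_{ik}\mathbf{S}^{*}_{jk}=\delta_{ij}\mathbf{I},\qquad \left(\mathbf{S}^{\dagger}\mathbf{S}\right)_{ij}=\sum_{k}\mathbf{S}^{*}_{ki}\mathbf{S}_{kj}=\delta_{ij}\mathbf{I}.
\end{align*}
By Definition \ref{def:Commutation} and Remark \ref{rem:selfCommutativity}, the phrase "off-diagonal elements" refers to the components with $i\neq j$ of each cell, since each cell is the $(i,j)$ component of a product of two $d\mathbf{Y}$ elements; showing that every such cell contains the factor $\delta_{ij}$ therefore proves the claim.

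First I would rewrite the transposed coefficients through $\mathbf{S}^{\top}_{li}=\mathbf{S}_{il}$ and $\mathbf{S}^{\top}_{l'j}=\mathbf{S}_{jl'}$, so that each cell becomes a sum of products of ordinary entries $\mathbf{S}_{i\bullet}$, $\mathbf{S}^{*}_{j\bullet}$, $\mathbf{L}$, and a single surviving fundamental increment ($d\mathbf{\Lambda}_{kl'}$, $d\mathbf{\Lambda}_{k0}$, $d\mathbf{\Lambda}_{0l'}$, or $dt$). The structural observation driving the proof is that in each cell exactly one internal summation index is shared between a coefficient $\mathbf{S}_{i\bullet}$ and a coefficient $\mathbf{S}^{*}_{j\bullet}$ that have been brought adjacent and into the order $\mathbf{S}_{i\bullet}\mathbf{S}^{*}_{j\bullet}$ by the general It\^{o} contraction $d\mathbf{\Lambda}_{kr}d\mathbf{\Lambda}_{sl}=\hat{\delta}_{rs}d\mathbf{\Lambda}_{kl}$ of \eref{eq:GeneralItoMultiplication}: the contraction identifies the inner field index of the first factor with that of the second, pinning those two coefficients together. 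Summing over that shared index and invoking the first identity above then replaces the pair by $\delta_{ij}\mathbf{I}$.

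I would then carry this out cell by cell; in each of the nine cases the contraction produces $\delta_{ij}\mathbf{I}$ multiplying a finite residual, either a $dt$ coefficient assembled from the remaining $\mathbf{S}$ and $\mathbf{L}$ factors (with $\mathbf{I}_{ij}dt=\delta_{ij}dt$ the trivial extreme case) or a single surviving increment weighted by such factors. In every case the entry vanishes whenever $i\neq j$, which is the assertion of the lemma and, together with \Tref{tab:ItoTableResult} then retaining only its diagonal, makes $d\mathbf{Y}_t d\mathbf{Y}_t^{\top}$ diagonal and hence symmetric.

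The step I expect to be the main obstacle is bookkeeping the ordering under non-commutativity: because the entries of $\mathbf{S}$ are system operators that need not commute with one another, the unitarity identities may be applied only to a pair that is already adjacent and ordered precisely as $\mathbf{S}_{i\bullet}\mathbf{S}^{*}_{j\bullet}$, matching $(\mathbf{S}\mathbf{S}^{\dagger})_{ij}$, and one is not free to permute coefficients to manufacture such a pair. The key point I would stress is that this correct ordering is not coincidental but is forced by the It\^{o} index contraction, so the required adjacency holds uniformly across all nine cells and no cell can fail to produce the factor $\delta_{ij}$.
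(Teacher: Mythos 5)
Your proposal is correct and matches the paper's own proof: both arguments reduce every cell of Table \ref{tab:ItoTableResult} to a factor $\sum_{l}\mathbf{S}_{il}\mathbf{S}^{*}_{jl}=(\mathbf{S}\mathbf{S}^{\dagger})_{ij}=\delta_{ij}\mathbf{I}$ via unitarity of the scattering matrix, so all off-diagonal entries vanish. Your additional remarks on the ordering forced by the It\^{o} contraction only elaborate what the paper leaves implicit.
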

			\begin{proof}
				For every entry in \Tref{tab:ItoTableResult}, we have
				\begin{align*}
				\sum_{l = 1}^{n} \mathbf{S}^{\top}_{ki} \mathbf{S}^{*}_{jk} =  \sum_{l = 1}^{n} \mathbf{S}^{\top}_{li} \mathbf{S}^{*}_{jl} = & \left(\mathbf{S}_t\mathbf{S}^{\dagger}\right)_{i,j} = \left(I\right)_{i,j} = 0, \; \forall i \neq j ,
				\end{align*}
				which shows that the non-diagonal elements of the multiplication results are zero.
			\end{proof}
			
			\begin{theorem}\label{thm:CommutativityOfQuantumNetwork}
			A set of $n$ general measurements \eref{eq:GeneralMeasurement} $\mathbf{Y}$ is self-commutative for any multiple-output quantum system with $n$ channels if and only if,
			\begin{subequations}
				\begin{align}
				\begin{bmatrix}
				\mathbf{F} & \mathbf{F}^{\ast}
				\end{bmatrix}
				\begin{bmatrix}
				\mathbf{0} & \mathbf{I}\\
				-\mathbf{I} & \mathbf{0}
				\end{bmatrix}
				\begin{bmatrix}
				\mathbf{F}^{\top} \\ \mathbf{F}^{\dagger}
				\end{bmatrix} = & \mathbf{0}, \label{eq:ConditionOnF}\\
				\begin{bmatrix}
				\mathbf{G} & \mathbf{F}^{\ast}
				\end{bmatrix}
				\begin{bmatrix}
				\mathbf{0} & \mathbf{I}\\
				-\mathbf{I} & \mathbf{0}
				\end{bmatrix}
				\begin{bmatrix}
				\mathbf{G}^{\top} \\ \mathbf{F}^{\dagger}
				\end{bmatrix} = & \mathbf{0},\label{eq:ConditionOnGF_ast}\\
				\begin{bmatrix}
				\mathbf{G} & \mathbf{F}
				\end{bmatrix}
				\begin{bmatrix}
				\mathbf{0} & \mathbf{I}\\
				-\mathbf{I} & \mathbf{0}\label{eq:ConditionOnGF}
				\end{bmatrix}
				\begin{bmatrix}
				\mathbf{G}^{\top} \\ \mathbf{F}^{\top}
				\end{bmatrix} = & \mathbf{0}.
				\end{align}			
			\end{subequations}
			Furthermore, Eq. \eqref{eq:ConditionOnF} is equivalent to  $\Re(\mathbf{F})\Im(\mathbf{F})^{\top}$ being symmetric, Eq. \eqref{eq:ConditionOnGF_ast} is equivalent to both $\Re(\mathbf{G})\Re(\mathbf{F})^{\top},\Im(\mathbf{G})\Im(\mathbf{F})^{\top}$ being symmetric whilst  Eq. \eqref{eq:ConditionOnGF}
			is equivalent to both $\Re(\mathbf{G})\Im(\mathbf{F})^{\top},\Im(\mathbf{G})\Re(\mathbf{F})^{\top}$ being symmetric.
			\end{theorem}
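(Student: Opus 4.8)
By Remark~\ref{rem:selfCommutativity} the family $\mathbf{Y}$ is self-commutative precisely when the quantum It\^o product $d\mathbf{Y}_t\,d\mathbf{Y}_t^{\top}$ is a symmetric matrix, so the whole statement reduces to computing this product and deciding when it is symmetric. The plan is to carry out that computation from the decomposition \eref{eq:a_and_b} and to read the three matrix identities off the coefficients of the fundamental differentials.

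Substituting \eref{eq:a_and_b} into $d\mathbf{Y}_t\,d\mathbf{Y}_t^{\top}$ and applying the It\^o rules \eref{eq:GeneralItoMultiplication}, the pure drift piece $d\mathbf{a}_3$ drops out (it multiplies only $dt$) and the sole surviving products are those recorded in \Tref{tab:ItoTableResult}. Lemma~\ref{lem:ItoTableResult}, through the unitarity $\mathbf{S}\mathbf{S}^{\dagger}=\mathbf{I}$, collapses every internal $\mathbf{S}$-contraction to an identity, so that the raw It\^o products are diagonal in the channel index and their contraction with the prefactors $\mathbf{F}$, $\mathbf{F}^{\ast}$ and $\mathbf{G}$ organises into coefficient matrices multiplying $d\mathbf{\Lambda}_{kl'}$, $d\mathbf{\Lambda}_{k0}$, $d\mathbf{\Lambda}_{0l'}$ and $dt$. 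Sorting by the prefactors carried, the pure conservation ($\mathbf{G}$--$\mathbf{G}$) contribution is a diagonal conservation term and hence symmetric; the $\mathbf{F}^{\ast}d\mathbf{a}_1$ with $\mathbf{F}\,d\mathbf{a}_2$ coupling yields $\mathbf{F}\mathbf{F}^{\dagger}$, the $\mathbf{G}$ with $\mathbf{F}^{\ast}$ coupling yields $\mathbf{G}\mathbf{F}^{\dagger}$, and the $\mathbf{G}$ with $\mathbf{F}$ coupling yields $\mathbf{G}\mathbf{F}^{\top}$.

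Requiring $d\mathbf{Y}_t\,d\mathbf{Y}_t^{\top}$ to be symmetric then amounts to the symmetry of each of these three coefficient matrices, and conversely. Since $(\mathbf{F}\mathbf{F}^{\dagger})^{\top}=\mathbf{F}^{\ast}\mathbf{F}^{\top}$, $(\mathbf{G}\mathbf{F}^{\dagger})^{\top}=\mathbf{F}^{\ast}\mathbf{G}^{\top}$ and $(\mathbf{G}\mathbf{F}^{\top})^{\top}=\mathbf{F}\mathbf{G}^{\top}$, symmetry of $\mathbf{F}\mathbf{F}^{\dagger}$, $\mathbf{G}\mathbf{F}^{\dagger}$ and $\mathbf{G}\mathbf{F}^{\top}$ is exactly the content of \eref{eq:ConditionOnF}, \eref{eq:ConditionOnGF_ast} and \eref{eq:ConditionOnGF} once those products are expanded, which gives the ``if and only if''. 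I expect this matching to be the main obstacle: one must track the conjugations through \Tref{tab:ItoTableResult} without slips and, in particular, verify that the $\mathbf{S}$- and $\mathbf{L}$-dependent pieces collapse under unitarity or survive only inside the symmetric conservation term, so that the resulting conditions constrain $\mathbf{F}$ and $\mathbf{G}$ alone and hold for every $n$-channel system.

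For the final reformulation I would write $\mathbf{F}=\Re(\mathbf{F})+i\Im(\mathbf{F})$ and $\mathbf{G}=\Re(\mathbf{G})+i\Im(\mathbf{G})$ and separate real and imaginary parts. For \eref{eq:ConditionOnF} this is clean: the real part $\Re(\mathbf{F})\Re(\mathbf{F})^{\top}+\Im(\mathbf{F})\Im(\mathbf{F})^{\top}$ of $\mathbf{F}\mathbf{F}^{\dagger}$ is automatically symmetric and its imaginary part $\Im(\mathbf{F})\Re(\mathbf{F})^{\top}-\Re(\mathbf{F})\Im(\mathbf{F})^{\top}$ is automatically antisymmetric, so symmetry collapses to the single requirement that $\Re(\mathbf{F})\Im(\mathbf{F})^{\top}$ be symmetric. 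For the two mixed products neither block is automatically (anti)symmetric: separating real and imaginary parts, \eref{eq:ConditionOnGF_ast} is to be matched with the symmetry of $\Re(\mathbf{G})\Re(\mathbf{F})^{\top}$ and $\Im(\mathbf{G})\Im(\mathbf{F})^{\top}$ and \eref{eq:ConditionOnGF} with that of $\Re(\mathbf{G})\Im(\mathbf{F})^{\top}$ and $\Im(\mathbf{G})\Re(\mathbf{F})^{\top}$. Because the real and imaginary blocks of the two products intertwine these four matrices, I would confirm the correspondence by expanding the real and imaginary parts of both equations and taking the suitable sums and differences.
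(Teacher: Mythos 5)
Your proposal is correct and follows essentially the same route as the paper's proof: reduce self-commutativity to symmetry of $d\mathbf{Y}_td\mathbf{Y}_t^{\top}$ via Remark \ref{rem:selfCommutativity}, use Lemma \ref{lem:ItoTableResult} and unitarity of $\mathbf{S}$ to collapse the It\^{o} products to $\mathbf{G}\mathbf{O}_1\mathbf{G}^{\top}+\mathbf{G}\mathbf{O}_2\mathbf{F}^{\dagger}+\mathbf{F}\mathbf{O}_3\mathbf{G}^{\top}+\mathbf{F}\mathbf{F}^{\dagger}$, argue the independence of the surviving differentials forces symmetry of each coefficient matrix separately, and then split into real and imaginary parts. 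Your closing observation that the mixed conditions only disentangle into the four symmetric products after taking sums and differences of \emph{both} \eref{eq:ConditionOnGF_ast} and \eref{eq:ConditionOnGF} is a fair (and slightly more careful) reading of the final equivalence than the paper's own wording.
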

			\begin{proof}
			Let $\mathbf{Y}$ be a generalized measurement whose evolution is given by \eref{eq:GeneralMeasurement}. Then to prove the theorem, Remark \ref{rem:selfCommutativity} shows that it is sufficient to show that $d\mathbf{Y}d\mathbf{Y}^{\top}$ is symmetric in order to show that all measurement outputs commute with each other. Simplifying \Tref{tab:ItoTableResult}  to \Tref{tab:ItoTableResultSimplified} and evaluating the $d\mathbf{Y}d\mathbf{Y}^{\top}$, from \Tref{tab:ItoTableResultSimplified}, one obtains,
			\begin{align}
						\left(d\mathbf{Y}d\mathbf{Y}^{\top}\right)_{ij} = & \left(\mathbf{G}\left[d\mathbf{b}_1d\mathbf{b}_1^{\top}+d\mathbf{b}_1d\mathbf{b}_2^{\top}+d\mathbf{b}_3d\mathbf{b}_1^{\top}+d\mathbf{b}_3d\mathbf{b}_2^{\top}\right]\mathbf{G}^{\top}\right)_{ij} \nonumber\\
						&+\left(\mathbf{G}\left[d\mathbf{b}_1d\mathbf{a}_1^{\top}+d\mathbf{b}_3d\mathbf{a}_1^{\top}\right]\mathbf{F}^{\dagger}\right)_{ij}\nonumber\\
						&+\left(\mathbf{F}\left[d\mathbf{a}_2d\mathbf{b}_1^{\top}+d\mathbf{a}_2d\mathbf{b}_2^{\top}\right]\mathbf{G}^{\top}\right)_{ij}\nonumber\\
						&+\left(\mathbf{F}d\mathbf{a}_2d\mathbf{a}_1^{\top}\mathbf{F}^{\dagger}\right)_{ij} \nonumber\\
						&= \left(\mathbf{G}\mathbf{O}_1\mathbf{G}^{\top} + \mathbf{G}\mathbf{O}_2\mathbf{F}^{\dagger} + \mathbf{F}\mathbf{O}_3\mathbf{G}^{\top} + \mathbf{F}\mathbf{F}^{\dagger}\right)_{ij}dt. \label{eq:dYdYT}
			\end{align}
			By Lemma \ref{lem:ItoTableResult}, we have $\mathbf{O}_1,\mathbf{O}_2$ and $\mathbf{O}_3$ are diagonal matrices. Since every diagonal element of $\mathbf{O}_i , i=1,2,3$ has different creation, annihilation and conservation processes, by \Tref{tab:ItoTableResultSimplified}, requiring $d\mathbf{Y}d\mathbf{Y}^{\top}$ to be symmetric is equivalent to the symmetry of $\mathbf{G}\mathbf{O}_1\mathbf{G}^{\top}, \mathbf{G}\mathbf{O}_2\mathbf{F}^{\dagger},\mathbf{F}\mathbf{O}_3\mathbf{G}^{\top}$ and $\mathbf{F}\mathbf{F}^{\dagger}$.
			For the first term, we have $\mathbf{G}\mathbf{O}_1\mathbf{G}^{\top} = (\mathbf{G}\mathbf{O}_1\mathbf{G}^{\top})^{\top}$, which is satisfied for all $\mathbf{G}$. 
			Furthermore, for $\mathbf{F}\mathbf{F}^{\dagger}$, we have the symmetry condition
			\begin{align}
			\mathbf{F}\mathbf{F}^{\dagger} = & \mathbf{F}^{\ast}\mathbf{F}^{\top} ,
			\end{align}
			which is equivalent to $\Re(\mathbf{F})\Im(\mathbf{F})^{\top} - \Re(\mathbf{F})^{\top}\Im(\mathbf{F})=0$, and in turn equivalent to condition Eq. \eqref{eq:ConditionOnF}. For $\mathbf{G}\mathbf{O}_2\mathbf{F}^{\dagger}$, we have the symmetry condition
			\begin{align}
			\left(\mathbf{G}\mathbf{O}_2\mathbf{F}^{\dagger}\right)_{ij} = & \left(\mathbf{G}\mathbf{O}_2\mathbf{F}^{\dagger}\right)_{ji} = \sum_{k=1}^{n}\mathbf{G}_{jk}\mathbf{O}_{2,kk}\mathbf{F}^{\dagger}_{ki}.
			\end{align}
			Since every diagonal element of $\mathbf{O}_2$ will have a different creation process at each $k$,this condition is equivalent to the equality being satisfied for every $k$, which is equivalent to the condition $\mathbf{G}\mathbf{F}^{\dagger} = \mathbf{F}^{\ast}\mathbf{G}^{\top}$. This equality is equivalent to the symmetry of $\Re(\mathbf{G})\Re(\mathbf{F})^{\top}$ and $\Im(\mathbf{G})\Im(\mathbf{F})^{\top}$.
			Using a similar argument, the third line of Eq. \eref{eq:dYdYT} is also equivalent to the condition $\mathbf{G}\mathbf{F}^{\top} = \mathbf{F}\mathbf{G}^{\top}$, but this equality is equivalent to the symmetry of $\Re(\mathbf{G})\Im(\mathbf{F})^{\top}$ and $\Im(\mathbf{G})\Re(\mathbf{F})^{\top}$, which completes the proof.
						
			\begin{table}[!h]
				\centering
				\begin{tabular}{|c|c|c|c|}
					\hline  $\times$ &$d\mathbf{b}_1^{\top}$&$d\mathbf{b}_2^{\top}$&$d\mathbf{a}_1^{\top}$\\ 
					\hline  $d\mathbf{b}_1$& $\left(\mathbf{S}d\mathbf{\Lambda}^{\top}\mathbf{S}^{\dagger}\right)_{ii}$ & $\left(\mathbf{S}^*d\mathbf{A}^*\mathbf{L}^{\top}\right)_{ii}$  & $\left(\mathbf{S}^*d\mathbf{A}^*\right)_i $   \\ 
					\hline  $d\mathbf{b}_3$& $\left(\mathbf{S}d\mathbf{A}\mathbf{L}^{\dagger}\right)_{ii}$& $ \left(\mathbf{L}^{*}\mathbf{L}^{\top} dt\right)_{ii}$  & $\left(\mathbf{L}dt\right)_i$ \\
					\hline  $d\mathbf{a}_2$& $\left(\mathbf{S}d\mathbf{A}\right)_i $    & $\left(\mathbf{L}^*dt\right)_i$  & $dt$  \\
					\hline 
				\end{tabular}
				\caption{It\^{o} multiplication table for $d\mathbf{Y}$ components. } \label{tab:ItoTableResultSimplified} 
			\end{table}
			\end{proof}

			To clarify this result, we provide a few examples. In the case of a quantum system with two output channels, both subject to homodyne detection, $\mathbf{F} = \mathbf{I}$ and $\mathbf{G} = \mathbf{0}$. The case of photon counting measurement at both output channels is given by $\mathbf{F} = 0$ and $\mathbf{G} = \mathbf{I}$. A combination of homodyne detection  and photon counting measurement is given by 			
			\begin{align*}
			\mathbf{F} = \begin{pmatrix}
			1 & 0\\0 & 0
			\end{pmatrix} ,\; \mathbf{G} = \begin{pmatrix}
			0 & 0\\0 & 1
			\end{pmatrix}. 
			\end{align*} In these cases the self-commutativity condition of Theorem \ref{thm:CommutativityOfQuantumNetwork} can be easily verified. However, taking 
			\begin{align*}
			\mathbf{F} = \begin{pmatrix}
			0 & 0\\1 & 0
			\end{pmatrix} ,\; \mathbf{G} = \begin{pmatrix}
			1 & 0\\0 & 0
			\end{pmatrix}, 
			\end{align*}
			means that the first measurement is homodyne detection on the first output channel, and the second is the photon counting measurement on the same channel. Now, $\mathbf{F}\mathbf{G}^{\top}$ is not symmetric, thus by Theorem \ref{thm:CommutativityOfQuantumNetwork} the measurement vector is not self-commutative. 
			In the next subsection, we will present our second result, which gives a general derivation of a quantum filter for a set of commutative measurements.
	\subsection{General Quantum Filter For Multiple Compatible Measurements}
		To derive a quantum filter for multiple measurements, we follow the characteristic function method described in Refs. \cite{van2005feedback,gough2011quantum}. We will use the following notation to denote the conditional expectation,
			\begin{align}
			\pi(X) _t & = \hat{X}_t =   \mathbb{E}_{\rho_0 \otimes \Phi} \left[X_t | \mathcal{Y}_t\right],
			\end{align}
		where $\rho_0$ is the initial system density matrix, $\Phi$ is the vacuum state of the field, and $\mathcal{Y}_t$ is a commutative von Neumann algebra generated by measurements $\mathbf{Y}_t$.
		\begin{theorem}
			Let $\lbrace Y_{i,t}, i=1,\cdots N \rbrace $ be a set of $N$ compatible measurement outputs for a quantum system $\mathcal{G}$. With vacuum initial state, the corresponding joint measurement quantum filter is given by
			
			\begin{align}
				d\hat{X} = & \pi_t \left[ -i \left[X_t,H_t\right]  + \mathcal{L}_L(X_t)\right] dt+ \sum\limits_{i=1}^{N} \beta_{i,t} dW_{i,t},
			\end{align}	
			where $dW_{i,t} = dY_{i,t} - \pi_t\left(dY_{i,t}\right)$ is a martingale process for each measurement output and $\beta_{i,t}$ is the corresponding gain given by
			\begin{subequations}
					\begin{align}
					\beta =& \Sigma^{-1}\zeta , \label{eq:beta}\\
					\zeta^{\top}  = & \pi_t\left(X_t d\mathbf{Y}_t^{\top}\right) - \pi_t\left(X\right) \pi_t \left(d\mathbf{Y}_t^{\top}\right) + \pi_t \left(\left[\mathbf{L}^{\dagger}_t,X_t\right]\mathbf{S}_td\mathbf{A}d\mathbf{Y}_t^{\top}\right),\label{eq:zeta}\\
					\Sigma =& \pi_t \left(d\mathbf{Y}_td\mathbf{Y}_t^{\top}\right),\label{eq:Sigma} 
					\end{align}
			\end{subequations}\label{thm:JointMeasurementQuantumFiltration}
			where $\Sigma$ is assumed to be non-singular.	
		\end{theorem}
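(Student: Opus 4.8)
The plan is to derive the filter by the characteristic-function (reference-probability) method of Refs.~\cite{van2005feedback,gough2011quantum,bouten2007introduction}, adapted to the vector of compatible measurements $\mathbf{Y}_t$. Throughout I use that, by Theorem~\ref{thm:CommutativityOfQuantumNetwork}, the outputs are self-commutative, so $\mathcal{Y}_t$ is a genuine commutative von Neumann algebra and $\Sigma = \pi_t(d\mathbf{Y}_t d\mathbf{Y}_t^\top)/dt$ is symmetric; together with the non-demolition property $[X_t, Y_{i,s}] = 0$ for $s \le t$, this guarantees that $\pi_t(X) = \mathbb{E}_{\rho_0\otimes\Phi}[X_t \mid \mathcal{Y}_t]$ is well defined and is characterized as the unique $\mathcal{Y}_t$-measurable element satisfying $\mathbb{E}_{\rho_0\otimes\Phi}[(X_t - \pi_t(X))K] = 0$ for every $K \in \mathcal{Y}_t$.

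First I would replace the condition ``for every $K \in \mathcal{Y}_t$'' by a test family that is total in $\mathcal{Y}_t$. For an arbitrary bounded deterministic vector $\mathbf{f}_t = (f_{1,t},\dots,f_{N,t})$ define the process $c_t$ by $dc_t = c_t\,\mathbf{f}_t^\top d\mathbf{Y}_t$, $c_0 = 1$; since $\mathbf{Y}$ is commutative, $c_t \in \mathcal{Y}_t$, and the collection of such $c_t$ is total in $\mathcal{Y}_t$. Hence the projection property is equivalent to
\begin{align*}
\mathbb{E}_{\rho_0\otimes\Phi}\left[(X_t - \pi_t(X))c_t\right] = 0 \quad \text{for all } t \text{ and all } \mathbf{f}.
\end{align*}
I then posit the ansatz $d\pi_t(X) = F_t\,dt + \beta_t^\top d\mathbf{W}_t$ with $d\mathbf{W}_t = d\mathbf{Y}_t - \pi_t(d\mathbf{Y}_t)$ and unknown $\mathcal{Y}_t$-adapted coefficients $F_t$ (scalar) and $\beta_t$ (vector), to be identified.

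The core computation is to differentiate the orthogonality identity. Writing $M_t = X_t - \pi_t(X)$ and applying the quantum It\^o product rule to $d(M_t c_t) = (dM_t)c_t + M_t\,dc_t + (dM_t)(dc_t)$, I take the vacuum expectation. Every increment $d\mathbf{A}_t$, $d\mathbf{A}_t^\dagger$, $d\mathbf{\Lambda}_t$ multiplying a factor adapted to $[0,t]$ has zero vacuum expectation, so only the drift parts and the surviving It\^o cross-products of Table~\ref{tab:ItoTableResultSimplified} remain. Using $dX_t$ from \eqref{eq:QSDE_X}, the term $(dX_t)c_t$ contributes the generator drift $\pi_t(-i[X_t,H_t]+\mathcal{L}_L(X_t))$; the term $X_t\,dc_t$, after commuting $c_t$ past $X_t$ (non-demolition) and projecting onto $\mathcal{Y}_t$, contributes $\pi_t(X_t\,d\mathbf{Y}_t^\top)$; and the cross-term $(dX_t)(dc_t)$ retains exactly the product of the $[\mathbf{L}_t^\dagger,X_t]\mathbf{S}_t\,d\mathbf{A}_t$ part of $dX_t$ with the creation/conservation part of $d\mathbf{Y}_t$, producing the correction $\pi_t([\mathbf{L}_t^\dagger,X_t]\mathbf{S}_t\,d\mathbf{A}\,d\mathbf{Y}_t^\top)$. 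The analogous expansion of $d(\pi_t(X)c_t)$ using the ansatz produces $F_t$, the term $\pi_t(X)\pi_t(d\mathbf{Y}_t^\top)$, and the It\^o term $\beta_t^\top\Sigma$.

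Finally I would separate the resulting identity into its $\mathbf{f}$-independent and $\mathbf{f}$-linear parts, each of which must vanish for all test processes. The $\mathbf{f}$-independent part forces $F_t = \pi_t(-i[X_t,H_t]+\mathcal{L}_L(X_t))$, the a-priori drift. The $\mathbf{f}$-linear part yields the normal equation $\beta_t^\top\Sigma = \zeta^\top$, with $\zeta^\top$ collecting the three contributions identified above, i.e.\ exactly \eqref{eq:zeta}; since $\Sigma$ is assumed non-singular and is symmetric by Theorem~\ref{thm:CommutativityOfQuantumNetwork}, inversion gives $\beta = \Sigma^{-1}\zeta$ as in \eqref{eq:beta}. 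The martingale property of $dW_{i,t}$ follows from $\pi_t(d\mathbf{W}_t) = 0$ and the tower property. I expect the main obstacle to be the It\^o bookkeeping in $(dM_t)(dc_t)$: one must verify precisely which products have nonzero vacuum expectation so that only the $[\mathbf{L}_t^\dagger,X_t]\mathbf{S}_t\,d\mathbf{A}$ correction survives, and must justify that the test family $\{c_t\}$ is total in $\mathcal{Y}_t$ so that the operator identities may be read off from the scalar expectations.
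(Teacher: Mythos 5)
Your proposal is correct and follows essentially the same route as the paper: the characteristic-function method with the It\^{o} exponential $c_{f_t}$ as a total family of test elements in $\mathcal{Y}_t$, an ansatz for $d\hat{X}$, the same three-way It\^{o} decomposition of $d(X_t c_{f_t})$ and $d(\hat{X}_t c_{f_t})$, and matching of the $\mathbf{f}$-independent and $\mathbf{f}$-linear parts to obtain the drift and the gain $\beta = \Sigma^{-1}\zeta$. The only differences are cosmetic (you phrase the projection property as an orthogonality condition and write the ansatz directly in terms of $d\mathbf{W}_t$), and your explicit remark that the symmetry of $\Sigma$ is needed to pass from $\beta^{\top}\Sigma = \zeta^{\top}$ to $\beta = \Sigma^{-1}\zeta$ is a point the paper leaves implicit.
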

		\begin{proof}
			First, define a $\mathcal{Y}_t$-measurable It\^{o} exponential $c_{f_t}$ with respect to arbitrary functions $\lbrace f_{i,t} \rbrace$ whose derivative is given by,
			\begin{align}
			dc_{f_t} = & c_{f_t} \left[\sum\limits_{i=1}^{N}  f_{i,t}dY_{i,t} \right] = c_{f_t} d\mathbf{Y}_t^{\top}\mathbf{f}_t. \label{eq:dc_f}
			\end{align}
			Now the dynamics of the conditional expectation are assumed to be in the form of the following equation,
			\begin{align}
			d\hat{X} = & \alpha_t dt + \beta_t^{\top}d\mathbf{Y}_t ,\label{eq:dhatX}
			\end{align}
			where $\alpha_t$ and $\beta_{i,t}$ are to be determined from the conditional expectation relation
			\begin{align}
			\mathbb{E}_{\rho_0 \otimes \Phi} \left[X_t c_{f_t} | \mathcal{Y}_t \right] & = \mathbb{E}_{\rho_0 \otimes \Phi} \left[\mathbb{E}_{\rho_0 \otimes \Phi}\left[X_t | \mathcal{Y}_t\right]c_{f_t}\right], \nonumber\\
			\pi_t \left(X_t c_{f_t} \right) & = \pi_t \left(\hat{X}_t c_{f_t}\right). \label{eq:ConditionalExpectation}
			\end{align}
			\\
			From the QSDE of a system observable in Eq. \eref{eq:QSDE_X}, Eq. \eref{eq:dc_f}, and the definition of conditional expectation \eref{eq:ConditionalExpectation}, we have 
			\begin{subequations}
				\begin{align}
				d\pi_t  \left[X_t c_{f_t}\right] = &  \pi_t \left(dX_t c_{f_t} + X_t d c_{f_t} + dX_t  d c_{f_t}\right) ,\label{eq:dE_Xt_c}\\
				\pi_t  \left[dX_t c_{f_t}\right] = &  \pi_t \left[ -i \left[X_t,H_t\right] + \mathcal{L}_L(X_t)\right] c_{f_t} dt ,\label{eq:E_dXt_c}\\
				\pi_t  \left[X_t d c_{f_t}\right] = &  c_{f_t} \pi_t\left(X_t d\mathbf{Y}_t^{\top}\right)\mathbf{f}_t  ,\label{eq:E_Xt_dc}\\
				\pi_t  \left[d X_t d c_{f_t}\right] = &  c_{f_t} \left(\left[\mathbf{L}^{\dagger}_t,X_t\right]\mathbf{S}_td\mathbf{A}d\mathbf{Y}_t^{\top}\right)\mathbf{f}_t ,\label{eq:E_dXt_dc}
				\end{align}\label{eq:dE_Xt_c_All}
			\end{subequations}
			while $d\pi_t \left[\hat{X}_t c_{f_t}\right]$ given by,
			\begin{subequations}
				\begin{align}
				d\pi_t \left[\hat{X}_t c_{f_t}\right] = &  \pi_t \left(d\hat{X}_t c_{f_t} + \hat{X}_t d c_{f_t} + d\hat{X}_t  d c_{f_t}\right),\label{eq:dE_Xhatt_c}\\
				\pi_t  \left[d\hat{X}_t c_{f_t}\right] = &  c_{f_t}\left[\alpha_t dt + \beta_t^{\top} \pi_t \left(d\mathbf{Y}_t\right)\right],\label{eq:E_dXhatt_c}\\
				\pi_t  \left[\hat{X}_t d c_{f_t}\right] = &  \pi_t\left(X\right)c_{f_t}  \pi_t \left(d\mathbf{Y}_t^{\top}\right)\mathbf{f}_t,\label{eq:E_Xhatt_dc}\\
				\pi_t  \left[d \hat{X}_t d c_{f_t}\right] = &   c_{f_t} \beta_t^{\top}\pi_t \left(d\mathbf{Y}_td\mathbf{Y}_t^{\top}\right)\mathbf{f}_t.\label{eq:E_dXhatt_dc}
				\end{align}\label{eq:dE_Xhattt_c_All}
			\end{subequations}
			Equating \eref{eq:E_dXt_c} and \eref{eq:E_dXhatt_c}, solving for $\alpha_t$ and then substituting the result into Eq. \eref{eq:dhatX}, we obtain,
			\begin{align}
			d\hat{X} = & \pi_t \left[ -i \left[X_t,H_t\right] + \mathcal{L}_L(X_t)\right] dt+ \beta_t^{\top} \left[d\mathbf{Y}_t - \pi_t \left(d\mathbf{Y}_t\right)\right],\nonumber\\
			 = & \pi_t \left[ -i \left[X_t,H_t\right]  + \mathcal{L}_L(X_t)\right] dt+ \beta_t^{\top} d\mathbf{W}_t.
			\end{align}
			Furthermore, using the fact that the function $\mathbf{f}_t$ is arbitrary, we can equate the right-hand-side of Eqs. \eref{eq:dE_Xt_c_All} and \eref{eq:dE_Xhattt_c_All}, which recovers $\beta_{i,t}$,
			\begin{align}
			\beta^{\top}_t =& \zeta^{\top} \mathbf{\Sigma} ^{-1},
			\end{align}
			where $\zeta$ and $\Sigma$ are a real valued row vector and a real valued matrix, respectively given by,
			\begin{subequations}
				\begin{align}				
				\zeta^{\top}  = & \pi_t\left(X_t d\mathbf{Y}_t^{\top}\right) - \pi_t\left(X\right) \pi_t \left(d\mathbf{Y}_t^{\top}\right) + \pi_t \left(\left[\mathbf{L}^{\dagger}_t,X_t\right]\mathbf{S}_td\mathbf{A}d\mathbf{Y}_t^{\top}\right),\\
				\Sigma =& \pi_t \left(d\mathbf{Y}_td\mathbf{Y}_t^{\top}\right).
				\end{align}
			\end{subequations}
			Proof of the martingale properties of $W_{i,t}$	is given in Ref. \cite[proof of Theorem 7.1]{bouten2007introduction}.
		\end{proof}
		A restricted form of Theorem \ref{thm:JointMeasurementQuantumFiltration} has been independently proven in Ref. \cite[Theorem 9]{nurdin2014quantum} using the "reference probability" approach. This result applied to a class of generalized homodyne detection measurements, i.e. $\mathbf{G} = 0$.
		\\
		The result of Theorem \ref{thm:JointMeasurementQuantumFiltration} can be generalized further to include coherent initial states $\psi(u)$ by introducing a Weyl operator in Eq. \eref{eq:WeylOperator}. To do this, we select $U= I$ and $u\neq0$, in the Weyl operator parameters, and transform all of the fundamental quantum processes $M_t$ in \Eref{eq:QuantumFundamentalProcesses} into $M_t(u) = W(u,I)^{\dagger}M_t W(u,I)$. \cite{bouten2007introduction,parthasarathy2012}. 
		\\
		The dynamics of the quantum filter can also be expressed using the following equation,
		\begin{align}
		d\hat{X} = & \pi_t \left[ -i \left[X_t,H_t\right] + \mathcal{L}_L(X_t)\right] dt + \zeta^{\top} \mathbf{\Sigma} ^{-1}d\mathbf{W}. \label{eq:dHatXGeneral}
		\end{align} 
		From a classical filtering point of view, Eq. \eref{eq:dHatXGeneral} possesses some similarities to the Kalman filter, where $\pi_t \left[ -i \left[X_t,H_t\right] + \mathcal{L}_L(X_t)\right] dt$ is the a-priori estimate and $\zeta^{\top} \mathbf{\Sigma} ^{-1}$ is analogous to the Kalman gain which multiplies the innovation process $d\mathbf{W}$. 
		\begin{remark}
		Theorem \ref{thm:JointMeasurementQuantumFiltration} requires the existence of an invertible  differential measurement correlation matrix $\Sigma$, which is a sufficient condition for the joint measurements to be obtainable from a single quantum filter equation. This condition, however, is not a necessary condition, as we will encounter in Section \ref{sec:Limit}, where in a case of zero reflectivity, the quantum filter equation exists although $\Sigma$  is not invertible. 
		\end{remark}
		 
		In most cases of nonlinear estimation, Eq. \eref{eq:dHatXGeneral} is merely a representation for the estimator and cannot be interpreted as an explicit solution to the filtering problem \cite{segall1975nonlinear}. As in the classical filtering problem, explicit solutions to the general nonlinear filtering problem can be obtained using a variety of approximation methods \cite{lototsky2006wiener,crisan2011oxford}. However, in the quantum filter, rather than approximating the explicit solution of Eq. \eref{eq:dHatXGeneral}, one can convert the filtering problem in Eq. \eref{eq:dHatXGeneral}, which is given in the Heisenberg picture, into the Schr\"{o}dinger picture. Then one deals with the evolution of the system's conditional density operator at time $t$, $\rho_t$. As described in Ref. \cite{bouten2004stochastic}, by means of the relation $\pi_t(X) \equiv \text{tr}(\rho_t X)$, one can construct from Eq. \eref{eq:dHatXGeneral},
		\begin{align}
		d\rho_t = & \underbrace{\left[-i\left[H_t,\rho_t\right] + \mathbf{L}^{\top}\rho_t\mathbf{L}^{\ast} - \frac{1}{2}\mathbf{L}^{\dagger}\mathbf{L}\rho_t - \frac{1}{2}\rho_t\mathbf{L}^{\dagger}\mathbf{L}\right]dt }_{\text{a-priori}}+ \underbrace{\zeta_{\rho}^{\top}\Sigma^{-\top}d\mathbf{W}}_{\text{innovation term}},\label{eq:GeneralFilteringEquationSchrodinger}
		\end{align}
		where $\zeta_{\rho}$ in the above equation is now only a function of the conditional density operator, $\mathbf{L}$ and $H$, but not of the particular system observable $X$. 
		\\
		Finally, for numerical simulation efficiency, after truncating the Hilbert space dimension to a finite number $n$, instead of solving for the $ n \times n$ conditional density operator  in Eq. \eref{eq:GeneralFilteringEquationSchrodinger}, one can "\emph{unravel}" this equation, and solve instead for the state vector $|\psi \rangle$, which is an $n \times 1$ vector. This unravelled equation is of the form
		\begin{align}
		d |\psi\rangle = & -i \left(H_t - i \dfrac{1}{2}\mathbf{L}^{\dagger}\mathbf{L} \right)|\psi\rangle dt + \sigma(|\psi\rangle) dt + \delta(|\psi\rangle) d\mathbf{W}, \label{eq:GeneralFilteringEquationSchrodingerUnravel}
		\end{align}
		where $\sigma(|\psi\rangle)$ and $\delta(|\psi\rangle)$ are operators which are linear with respect to the coupling operator $\mathbf{L}$.
	\section{Application of The Quantum Filter to Joint Homodyne Detection and Photon Counting}
	\subsection{Quantum Filter for Joint Homodyne Detection and Photon Counting}\label{sec:Limit}
		\begin{figure}
			\centering
			\includegraphics[width=0.5\textwidth]{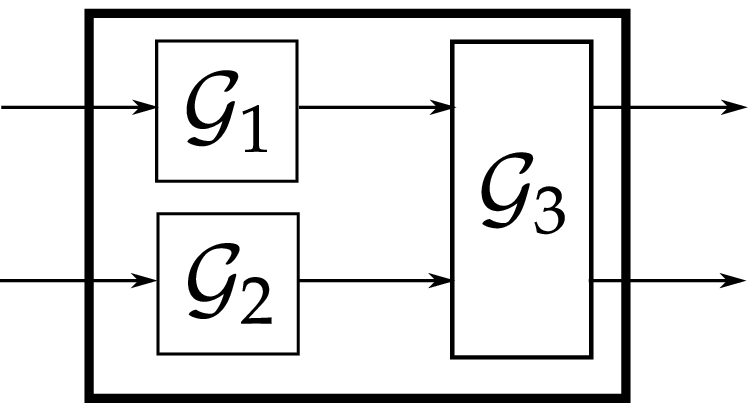}
			\caption{Quantum network depiction of the quantum optics setup of \Fref{fig:DoubleMeasurement}}
			\label{fig:QuantumNetwork}
		\end{figure}
		
	In this section, we derive the quantum filter for the setup shown in \Fref{fig:DoubleMeasurement}. We define two Fock spaces for the two corresponding input fields, the first Fock space for the system input field is given by $\Gamma_1(\mathsf{h})$, while the vacuum field input at the upper-port of the beam splitter is denoted by $\Gamma_2(\mathsf{h})$. Notice that if we restrict ourself to work in the linear span of coherent states, the Fock spaces $\Gamma_{i}$ $i=1,2$ possesses a continuous tensor product. For any time interval $0\leq s < t $, the Fock space $\Gamma_i$ can be decomposed into 
	\begin{equation}
	\Gamma_i = \Gamma_{i,s]}\otimes \Gamma_{i,[s,t]}\otimes \Gamma_{i,[t}. \label{eq:GammaDecompose}
	\end{equation}
	\\
	The overall quantum system with the measurement setup in \Fref{fig:DoubleMeasurement} can be depicted as shown in \Fref{fig:QuantumNetwork}. $\mathcal{G}_1$ is our system of interest, with parameters $(I,L,H)$. We concatenate the vacuum noise into our system by introducing a second open quantum system, $\mathcal{G}_2$ , whose parameters are $(1,0,0)$. The last open quantum system $\mathcal{G}_3$ is the beam splitter, with parameters $(\mathbf{S},0,0)$. The parameters of the composite quantum system are obtained by taking the series product and the concatenation product \cite{gough2009series}, giving $\mathcal{G} = \left(\mathcal{G}_1 \boxplus \mathcal{G}_2\right)\rhd\mathcal{G}_3$ with parameters $\left(\mathbf{S},\mathbf{S}\begin{pmatrix}
	L\\0
	\end{pmatrix},H\right)$.
	\\
	The output field of the system $\mathcal{G}_1$, $A_{s,t} = U^{\dagger}_t\left(I \otimes A_{i,t} \right)U_t$, is an operator on $\mathsf{h}_s\otimes\Gamma_{1,t]}(\mathsf{h})$, while the vacuum field $A_{v,t}$ is an operator on $\Gamma_{2,t]}(\mathsf{h})$. We denote the total Hilbert Space as $\mathsf{H} = \mathsf{h}_s\otimes\Gamma_1(\mathsf{h})\otimes\Gamma_2(\mathsf{h})$. The beam splitter equation is given by,
	\begin{align}
	\mathbf{S} & = \begin{pmatrix}
	\sqrt{1 - r^2} e^{i\theta}& re^{i\left(\theta + \frac{\pi}{2}\right)}\\
	re^{i\left(\theta + \frac{\pi}{2}\right)} & \sqrt{1 - r^2} e^{i\theta}
	\end{pmatrix},& r \geq& 0. \label{eq:BeamSplitterGeneral}
	\end{align}
	For homodyne measurement in the first output channel and photoncounting measurement in the second output channel, we have
		\begin{align*}
		d\mathbf{Y}_t =& \mathbf{F}^{\ast}d\tilde{\mathbf{A}}_t^{\ast} +\mathbf{F}d\tilde{\mathbf{A}}_t+ \mathbf{G}d\tilde{\mathbf{\lambda}}_t, \\
		\mathbf{F} =& \begin{pmatrix}
		1 & 0\\ 0 & 0
		\end{pmatrix} ,\; 
		\mathbf{G} = \begin{pmatrix}
		0 & 0\\ 0 & 1
		\end{pmatrix}.
		\end{align*}
		\\
	By Theorem \ref{thm:CommutativityOfQuantumNetwork}, the measurement set $d\mathbf{Y}$ is self-commutative. Substituting the general beam splitter \eref{eq:BeamSplitterGeneral} and output field evolution \eref{eq:OutputFieldEvolution}, the measurements quantum stochastic differential equations are given by,
	\begin{subequations}
	\begin{align}
	dY_{1,t} = &  \sqrt{1 - r^2} \left( \left(e^{i\theta}L_t + e^{-i\theta}L_t^{\dagger}\right) dt + e^{i\theta}dA_{i,t} + e^{-i\theta}dA^{\dagger}_{i,t}\right) \nonumber \\
	& + ir\left(e^{i\theta}dA_{v,t} - e^{-i\theta}dA^{\dagger}_{v,t}\right),\label{eq:dY_1General}\\
	dY_{2,t} = & r^2\left[ d\Lambda_{i,t} + L_t dA^{\dagger}_{i,t} + L^{\dagger}_t dA_{i,t} +  L^{\dagger}_t L_t dt\right]+ \left(1-r^2\right)d\Lambda_{v,t} \nonumber\\
	& + i \left(r \sqrt{1-r^2}\right)\left[d\Lambda_{vi,t} - d\Lambda_{iv,t} + L_t dA^{\dagger}_{v,t} - L_t^{\dagger}dA_{v,t}\right].\label{eq:dY_2General}
	\end{align}
	\end{subequations}
	These measurements can be proven to satisfy the non-demolition and self-non-demolition properties, see Ref. \cite{van2005feedback}. Next, we can compute the expectation and the correlation of the measurement time derivative as
	\begin{subequations}
	\begin{align}
	\pi_t \left(dY_{1,t}\right) = & \sqrt{1 - r^2} \pi_t\left(e^{i\theta}L_t + e^{-i\theta}L_t^{\dagger}\right) dt, \\
	\pi_t \left(dY_{2,t}dY_{2,t}\right) = & \pi_t \left(dY_{2,t}\right) =  r^2\pi_t\left(L^{\dagger}_t L_t\right) dt, \\
	\pi_t \left(dY_{1,t}dY_{1,t}\right) = & dt, \\
	\pi_t \left(dY_{2,t}dY_{1,t}\right) = & \pi_t \left(dY_{1,t}dY_{2,t}\right) =  0.
	\end{align}\label{eq:dY1dY2ExpectationGeneral}
	\end{subequations}
	Using these values, $\beta$ is then given by
	\begin{align}
	\beta_1 = & \sqrt{1 - r^2} \left(\pi_t\left(X_t e^{i\theta} L_t + e^{-i\theta} L^{\dagger}_t X_t\right) - \pi_t\left(X\right)\pi_t\left(e^{i\theta}L_t + e^{-i\theta} L^{\dagger}_t\right)\right),\\
	\beta_2 = & \dfrac{\pi_t\left( L^{\dagger}_t X_t L_t \right)}{\pi_t\left(L^{\dagger}_t L_t\right)}- \pi_t\left(X\right).
	\end{align}
	In the case that $r \rightarrow 1$, the estimation problem reduces to an estimation problem with a single photon counting process. The opposite case is more interesting. When  $r \rightarrow 0$, the gain $\beta_2$ has a non zero value, while the Poisson process has zero arrival rate, and hence the estimation problem reduces to an estimation problem with a single homodyne detection. This is unsurprising since zero reflection ensures all photons pass through to the homodyne detector. 
	\\
	We can unravel the stochastic master equation into the form Eq. \eref{eq:GeneralFilteringEquationSchrodingerUnravel}. By using the It\^{o} equivalence,
	\begin{align}
	d\rho &= d|\psi\rangle \langle \psi| + |\psi\rangle d\langle \psi| + d|\psi\rangle d\langle \psi|,
	\end{align}
	one recovers the unravelled stochastic Schr\"{o}dinger equation for the quantum filter,
	\begin{subequations}
	\begin{align}
	d |\psi\rangle = & -i \left(H - i \dfrac{1}{2}L^{\dagger}L \right)|\psi\rangle dt + \sigma(|\psi\rangle) dt + \delta_1(|\psi\rangle) dW + \delta_2(|\psi\rangle) dN, \label{eq:QFUnravelGeneral}\\
	\sigma(|\psi\rangle) = & \left(\left[\dfrac{1 - r^2}{2}\pi_t\left(e^{-i\theta}L^{\dagger}_t+e^{i\theta}L_t\right)\right]\right.L \nonumber\\
	& \left. +\left[\dfrac{r^2\pi_t\left(L^{\dagger}_t L_t\right)}{2}-\dfrac{1-r^2}{8}\pi_t\left(e^{-i\theta}L^{\dagger}_t+e^{i\theta}L_t\right)^2\right]\right)|\psi\rangle,\\
	\delta_1(|\psi\rangle) = & \sqrt{1-r^2}\left(L - \dfrac{1}{2}\pi_t\left(e^{-i\theta}L^{\dagger}_t+e^{i\theta}L_t\right)\right)|\psi\rangle,\\
	\delta_2(|\psi\rangle) = & \left(\dfrac{L}{\sqrt{\pi_t\left(L^{\dagger}_t L_t\right)}} - 1\right)|\psi\rangle.
	\end{align}\label{eq:QFUnravel}
	\end{subequations}
	Here, $dW$ is equal to $dW_1$, and $dN$ is equal to the Poisson process of the second measurement. The unravelled version of quantum filter given in Eq. \eref{eq:QFUnravel} is normalized. For the case $r=0$ and $r=1$, Eq. \eref{eq:QFUnravel} is equivalent to stochastic Schr\"{o}dinger equation (SSE) for homodyne detection and photon counting respectively, given in Refs. \cite[\textsection 6.1 \textsection 6.4]{breuer2007theory},\cite[\textsection 11.3 \textsection 11.4]{gardiner2004quantum}. 
	\\
	\subsection{Comparison with results in Ref. \cite{Kuramochi2013}}
	In this subsection, we give a comparison of our quantum filter with the results of Ref. \cite{Kuramochi2013}. Here, the unnormalized SSE for photon counting and homodyne detection was formulated heuristically by the addition of two measurement operations, where every operation determined the infinitesimal evolution of the unnormalized state. The SSE for photon counting and homodyne was given in Ref. \cite{Kuramochi2013} as, 
	\begin{subequations}
	\begin{align}
	|\tilde{\psi}_{t+dt}\rangle = & \left[1 + A dt + \left(B-1\right)dN + C dW\right]|\tilde{\psi}_t\rangle, \\
	A &=  -i H - \dfrac{L^{\dagger}L}{2} + L \langle L + L^{\dagger}\rangle ,\\
	C=B &=  L.
	\end{align}\label{eq:UnNormalizedKuramochi}
	\end{subequations}
	In these equations, we are slightly abusing the notation, by denoting $\pi_t\left(X\right) = \langle X\rangle$, and setting local oscillator angle to $\theta=0$.	To give a comparison of Eq. \eref{eq:UnNormalizedKuramochi} with our result in Eq. \eref{eq:QFUnravel}, one can consider the normalization of Eq. \eref{eq:UnNormalizedKuramochi} as detailed in Ref. \cite[\textsection 11.4]{gardiner2004quantum}. In general, the infinitesimal evolution given in Eq. \eref{eq:UnNormalizedKuramochi}, can be normalized to the following normalized SSE,
	\begin{subequations}
		\begin{align}
		d|\psi_t\rangle = & \left[\left(A+\hat{A}+C\hat{C}\right) dt + \left(\hat{B}B-1\right)dN + \left(C+\hat{C}\right) dW\right]|\psi_t\rangle, \\
		\hat{A} &=  \frac{3}{8}\langle C + C^{\dagger}\rangle^2 - \frac{1}{2} \langle A + A^{\dagger}\rangle - \frac{1}{2} \langle C^{\dagger}C\rangle,\\
		\hat{B} &=  \dfrac{1}{\sqrt{ \langle B^{\dagger} B \rangle} },\\
		\hat{C} &=  -\frac{1}{2}\langle C + C^{\dagger}\rangle.
		\end{align}\label{eq:NormalizedKuramochi}
	\end{subequations}
	Substituting these values into Eq. \eref{eq:UnNormalizedKuramochi}, one can get an SSE in the form of Eq. \eref{eq:QFUnravelGeneral}, with 
	\begin{subequations}
		\begin{align}
		\sigma(|\psi\rangle) = & \left[ \frac{1}{2} \langle L^{\dagger} + L\rangle L - \frac{1}{8} \langle L^{\dagger}+ L\rangle^2 \right] |\psi\rangle, \\
		\delta_1(|\psi\rangle) = & \left[ L - \frac{1}{2} \langle L^{\dagger}+ L\rangle \right] |\psi\rangle, \\
		\delta_2(|\psi\rangle) = &  \left[\dfrac{L}{\sqrt{\langle L^{\dagger} L\rangle}} - 1\right]|\psi\rangle.
		\end{align}	\label{eq:NormalizedKuramochiDetail}
	\end{subequations}
	However Eq. \eqref{eq:NormalizedKuramochiDetail} is \emph{not} consistent with our result in Eq. \eref{eq:QFUnravelGeneral}. In fact, Eq. \eref{eq:NormalizedKuramochiDetail} is consistent with Eq. \eref{eq:QFUnravelGeneral} in the limiting case $r=0$, which would correspond to a jump process with zero arrival rate. The paper \cite{Kuramochi2013} claims to consider the case of simultaneous jump and diffusion measurement processes, but the above comparison shows that it does not account for the required beam splitter. The equivalence of the unnormalized SSE in Ref. \cite{Kuramochi2013}  with the quantum filter \eref{eq:QFUnravel} is obtained when we take the beam splitter into consideration. In this case, instead of $A,B$ and $C$ given in Eq. \eref{eq:UnNormalizedKuramochi}, we will have an equivalent quantum filter as an unnormalized SSE given by
	
	\begin{subequations}
		\begin{align}
		A &=  -i H - \dfrac{L^{\dagger}L}{2} + \left(1 - r^2\right)L \langle L + L^{\dagger}\rangle,\\
		B &=  r L,\\
		C &=  \sqrt{1 - r^2}L.
		\end{align}\label{eq:NonNormalizedQF}
	\end{subequations}
	
	\Eref{eq:NonNormalizedQF} gives an intuitive interpretation of simultaneous filtering, where as in Ref. \cite{Kuramochi2013}, the unnormalized evolution of photon counting and homodyne detections requires the addition of the two measurement operations, but with the appropriate beam splitter gain.

	\section{Simulation Results}
	This section will show a simulation of the proposed quantum filter for an empty cavity with a number state as the initial condition. In this condition, the analytical probability distribution of the number state is given by \cite{breuer2007theory},
	\begin{subequations}
	\begin{align}
		\mathbf{P}_N(t) =& {{n}\choose{N}} \mu(t)^N - \left(1-\mu(t)\right)^{n-N},\\
		\mu(t) =& 1- e^{-\gamma t}. 
		\end{align}	\label{eq:Analytic}
	\end{subequations}
	Simulation results for different reflectivity factors are shown in \Fref{fig:EmptyCavity}. The cavity dissipative ratio $\gamma$ is set to one. 
	Figures \ref{fig:r2_1} and \ref{fig:r2_0} show single trajectory simulations of the expected number operator for the case of pure photon counting measurement and homodyne detection. \Fref{fig:r2_0.5_100} shows the non-trivial case of a half-reflective beam splitter $r^2 = 0.5$. In this case, the quantum filter average of the number operator converges to the analytical  prediction \eref{eq:Analytic} when the trial number is increased. \Fref{fig:r2_0.5_100} also shows that the SSE formulated in \cite{Kuramochi2013} gives a biased average compared with the analytical result.
	\begin{figure}[!h]
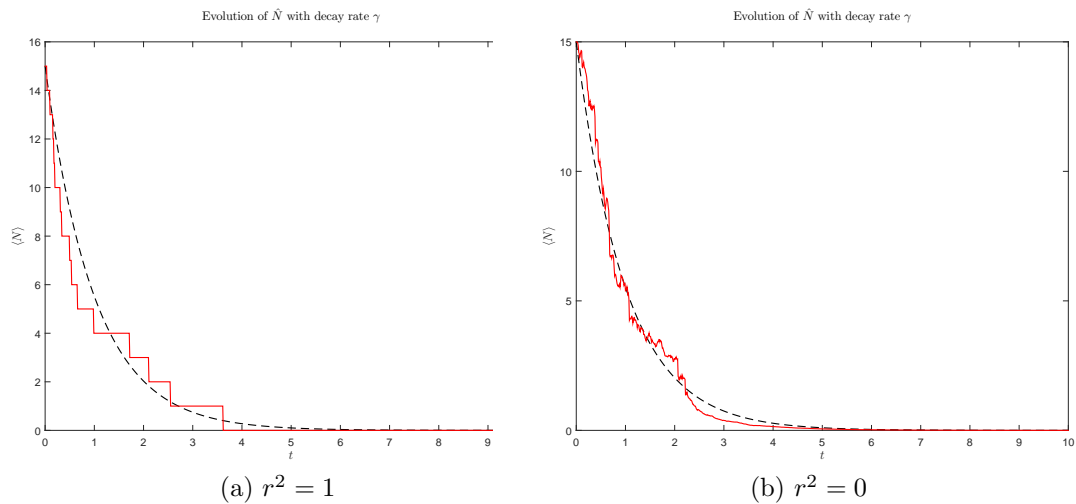

	\centering
		\subfloat[$r^2=1$]
		{\label{fig:r2_1}\includegraphics[width=0.45\textwidth]{Empty_Result1}}
		\subfloat[$r^2=0$]
		{\label{fig:r2_0}\includegraphics[width=0.45\textwidth]{Empty_Result2}}
	\caption{Single trajectory Monte-Carlo realizations of the quantum filter with initial number state and dissipation, with beam splitter reflectivity such that (a) $r^2 = 1$ and (b) $r^2 = 0$.} 
	\label{fig:EmptyCavity}
	\end{figure}

		\begin{figure}[!h]
		\centering
		\input{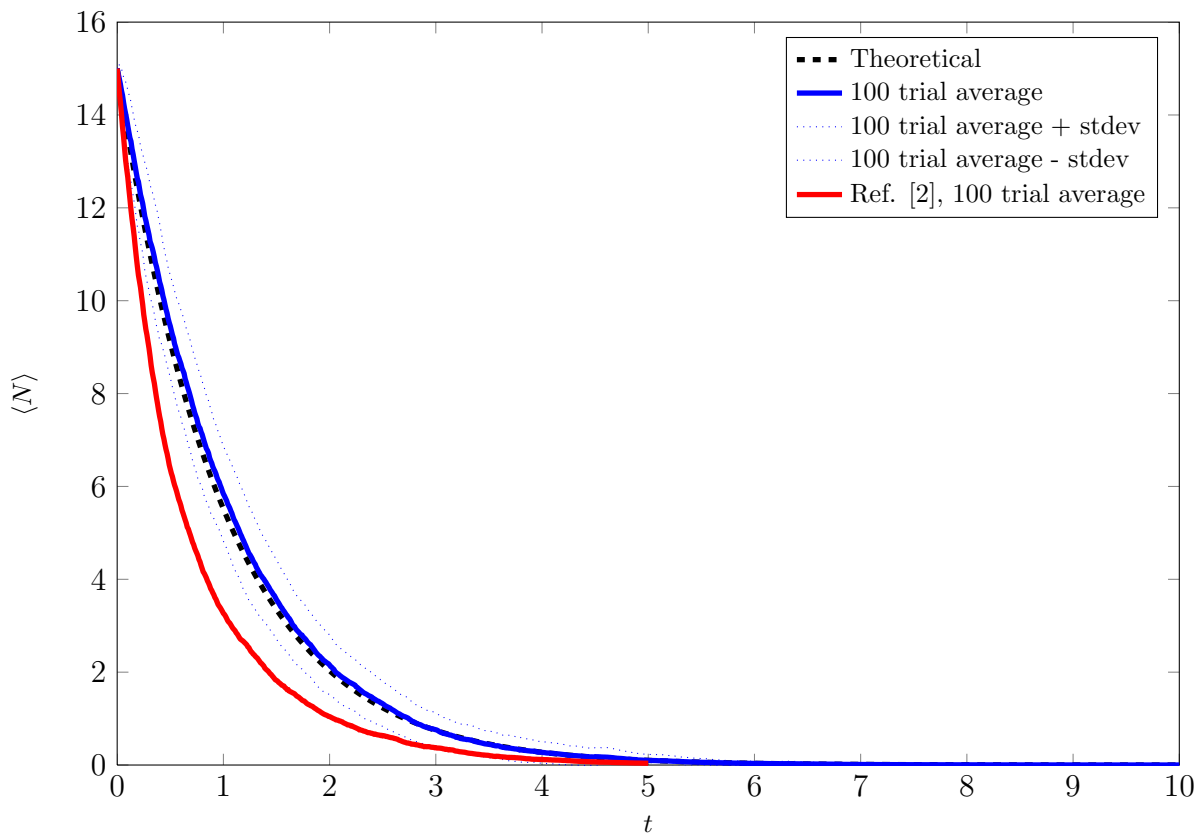}
		\caption{Expected of number operator as a function of time with number state initial condition and dissipation. The average of 100 Monte-Carlo trials along with a comparison to the analytical results and result of Ref. \cite{Kuramochi2013}. This figure shows the case of half reflective beam splitter $r^2 = 0.5$. The quantum filter expected number operator converges to the analytical prediction of Eq. \eref{eq:Analytic}. \Fref{fig:r2_0.5_100} also shows that the SSE formulated in Ref. \cite{Kuramochi2013} gives a biased expectation compared with the analytical result.}
		\label{fig:r2_0.5_100}
		\end{figure}

	\section{Conclusions}
	We have derived a sufficient and necessary condition for a class of quantum measurement output channels to satisfy a commutativity relation. The measurement class considered is quite general compared to Ref. \cite{nurdin2014quantum}, since it covers not only homodyne type measurements, but also photon counting type measurements. Furthermore, this commutativity condition enables us to derive a quantum filter corresponding to multiple measurement outputs. We also provide examples of the quantum filter for homodyne and photon counting detection. The quantum filter results were shown to be consistent with the homodyne and photon counting quantum filters for both extreme cases, where the reflectivity of the beam splitter is zero and one. In addition, the quantum filter also corrected the result of Ref. \cite{Kuramochi2013}, which ignored the effect of the beam splitter.
	\section{Acknowledgements}
	We acknowledge discussions with Dr. Katanya Kuntz of UNSW Canberra.

	\section{References}
	
	\bibliographystyle{iopart-num}
	\bibliography{Reference}

\providecommand{\newblock}{}
\begin{thebibliography}{10}
\expandafter\ifx\csname url\endcsname\relax
  \def\url#1{{\tt #1}}\fi
\expandafter\ifx\csname urlprefix\endcsname\relax\def\urlprefix{URL }\fi
\providecommand{\eprint}[2][]{\url{#2}}
% Bibliography created with iopart-num v2.1
% /biblio/bibtex/contrib/iopart-num

\bibitem{bouten2007introduction}
Bouten L, Handel R~V and James M~R 2007 {\em SIAM Journal on Control and
  Optimization\/} {\bf 46} 2199--2241

\bibitem{Kuramochi2013}
Kuramochi Y, Watanabe Y and Ueda M 2013 {\em Journal of Physics A: Mathematical
  and Theoretical\/} {\bf 46} 425303

\bibitem{Belavkin1992}
Belavkin V~P 1992 {\em Communications in mathematical physics\/} {\bf 146}
  611--635

\bibitem{belavkin1980quantum}
Belavkin V~P 1980 {\em Elektronika\/} {\bf 25} 1445k1453

\bibitem{belavkin1989nondemolition}
Belavkin V~P 1989 Nondemolition measurements, nonlinear filtering and dynamic
  programming of quantum stochastic processes {\em Modeling and Control of
  Systems\/} (Springer) pp 245--265

\bibitem{carmichael1993open}
Carmichael H 1993 {\em An open systems approach to Quantum Optics: lectures
  presented at the Universit{\'e} Libre de Bruxelles, October 28 to November 4,
  1991\/} vol~18 (Springer)

\bibitem{wiseman2010quantum}
Wiseman H~M and Milburn G 2010 {\em Quantum Measurement and Control\/}
  (Cambridge University Press) ISBN 9780521804424

\bibitem{wiseman2001complete}
Wiseman H~M and Di{\'o}si L 2001 {\em Chemical Physics\/} {\bf 268} 91--104

\bibitem{chia2011quantum}
Chia A and Wiseman H~M 2011 {\em Physical Review A\/} {\bf 84} 012120

\bibitem{nurdin2014quantum}
Nurdin H~I 2014 {\em Russian Journal of Mathematical Physics\/} {\bf 21}
  386--398

\bibitem{pellegrini2010markov}
Pellegrini C 2010 Markov chains approximation of jump-diffusion stochastic
  master equations {\em Annales de L'Institut Henri Poincare Section Physique
  Theorique\/} vol~46 pp 924--948

\bibitem{amini2014}
Amini H, Pellegrini C and Rouchon P 2014 {\em Russian Journal of Mathematical
  Physics\/} {\bf 21} 297--315 ISSN 1061-9208
  \urlprefix\url{http://dx.doi.org/10.1134/S1061920814030029}

\bibitem{Broome15022013}
Broome M~A, Fedrizzi A, Rahimi-Keshari S, Dove J, Aaronson S, Ralph T~C and
  White A~G 2013 {\em Science\/} {\bf 339} 794--798

\bibitem{Spring15022013}
Spring J~B, Metcalf B~J, Humphreys P~C, Kolthammer W~S, Jin X~M, Barbieri M,
  Datta A, NicholasThomas-Peter, KLangford N, Kundys D, Gates J~C, Smith B~J,
  RSmith P~G and Walmsley I~A 2013 {\em Science\/} {\bf 339} 798--801

\bibitem{lang2013correlations}
Lang C, Eichler C, Steffen L, Fink J~M, Woolley M, Blais A and Wallraff A 2013
  {\em Nature Physics\/} {\bf 9} 345--348

\bibitem{NeergaardNielsen2008}
Neergaard-Nielsen J~S 2008 {\em Generation of single photons and
  {S}chr{\"o}dinger kitten states of light\/} Ph.D. thesis Danish National
  Research Foundation Center for Quantum Optics - Quantop Niels Bohr Institute

\bibitem{KatanyaB.Kuntz2014}
Kuntz K~B, Song H, Webb J~G, Wheatley T~A, Furusawa A, Ralph T~C and Huntington
  E~H 2014 Heralded, frequency-multiplexed, quantum and non-{G}aussian states
  for telecommunications Personal Communication

\bibitem{chen2011microwave}
Chen Y~F, Hover D, Sendelbach S, Maurer L, Merkel S, Pritchett E, Wilhelm F and
  McDermott R 2011 {\em Physical review letters\/} {\bf 107} 217401

\bibitem{gough2009series}
Gough J and James M~R 2009 {\em Automatic Control, IEEE Transactions on\/} {\bf
  54} 2530--2544

\bibitem{parthasarathy2012}
Parthasarathy K~R 2012 {\em An Introduction to Quantum Stochastic Calculus\/}
  Modern Birkh{\"a}user classics (Springer Basel) ISBN 9783034805667

\bibitem{carmichael2000giant}
Carmichael H, Castro-Beltran H, Foster G and Orozco L 2000 {\em Physical Review
  Letters\/} {\bf 85} 1855

\bibitem{barchielli2006continual}
Barchielli A 2006 Continual measurements in quantum mechanics and quantum
  stochastic calculus {\em Open quantum systems III\/} (Springer) pp 207--292

\bibitem{van2005feedback}
Handel R~V, Stockton J~K and Mabuchi H 2005 {\em Automatic Control, IEEE
  Transactions on\/} {\bf 50} 768--780

\bibitem{gough2011quantum}
Gough J~E, Guta M, James M~R and Nurdin H~I 2011 {\em Communications in
  Information and Systems\/} {\bf 11} 237--268

\bibitem{segall1975nonlinear}
Segall A, Davis M and Kailath T 1975 {\em Information Theory, IEEE Transactions
  on\/} {\bf 21} 143--149

\bibitem{lototsky2006wiener}
Lototsky S 2006 {\em Applied Mathematics and Optimization\/} {\bf 54} 265--291

\bibitem{crisan2011oxford}
Crisan D and Rozovskii B 2011 {\em The Oxford handbook of nonlinear
  filtering\/} (Oxford University Press)

\bibitem{bouten2004stochastic}
Bouten L, Guta M and Maassen H 2004 {\em Journal of Physics A: Mathematical and
  General\/} {\bf 37} 3189

\bibitem{breuer2007theory}
Breuer H and Petruccione F 2007 {\em The Theory of Open Quantum Systems\/} (OUP
  Oxford) ISBN 9780199213900

\bibitem{gardiner2004quantum}
Gardiner C and Zoller P 2004 {\em Quantum Noise: A Handbook of {M}arkovian and
  Non-{M}arkovian Quantum Stochastic Methods with Applications to Quantum
  Optics\/} Springer Series in Synergetics (Springer) ISBN 9783540223016

\end{thebibliography}
	
\end{document}